\tikzset{
  myblock/.style={
    draw,text width=0.75cm,minimum height=2cm,align=center
  },
  mysquare/.style={
    draw,minimum size=1cm,align=center
  },
  twoarrows/.style n args={4}{
    decoration={
      markings,
      mark=at position #1 with {\arrow{>}\node[above] {#3};}, 
      mark=at position #2 with {\arrow{>}\node[above] {#4};} 
    },
  postaction=decorate  
  },
  onearrow/.style 2 args={
    decoration={
      markings,
      mark=at position #1 with {\arrow{>}\node[above] {#2};}, 
    },
  postaction=decorate  
  },
}
\setlist[description]{labelindent=1em}
\newcommand{\bfunc}[1]{\operatorname{\mathtt{#1}}}
\theoremstyle{definition}
\newtheorem{lemma}{Lemma} 
\newtheorem{theorem}{Theorem} 
\newtheorem{definition}{Definition}
\title{Specification of the Giskard Consensus Protocol}
\author{Elaine Li$^1$
\and
Karl Palmskog$^2$
\and
Mircea Sebe$^1$
\and
Grigore Ro{\c s}u$^1$}
\date{%
  \normalsize
  $^1$Runtime Verification, Inc., Urbana, IL, USA\\[-0.1cm]
  \texttt{\{elaine.li,mircea.sebe,grigore.rosu\}@runtimeverification.com}\\[0.1cm]
  $^2$KTH Royal Institute of Technology, Stockholm, Sweden\\[-0.1cm]
  \texttt{palmskog@acm.org}
}
\begin{document}
\maketitle

\begin{abstract}
The Giskard consensus protocol is used to validate transactions and computations in the PlatON network. In this paper, we provide a rigorous specification of Giskard, suitable to serve as a reference in protocol implementation and in formal verification. Using our specification, we prove that the protocol guarantees several notable safety properties.
\end{abstract}

\section{Introduction} 
The PlatON network provides a platform for distributed transactions and computations~\cite{PlatON}. The network relies on the Giskard consensus protocol~\cite{Giskard} to reach agreement among participating nodes regarding which blocks of transactions to add to a distributed ledger (blockchain). The protocol tolerates that up to one third of all participating nodes are Byzantine, i.e., may behave adversarially~\cite{Byzantine}.

In this paper, we present a rigorous specification of Giskard. In contrast to the protocol documentation~\cite{Giskard}, which gives an overview of the protocol and describes its general behavior, our aim is to specify the protocol in sufficient detail to serve as a guide for implementation and formal verification. To this end, we provide definitions of both protocol execution semantics and important protocol properties, and present proofs of its three key safety properties based on these definitions.
Since our focus is on protocol safety, we elide discussion of higher-level features of Giskard and only provide brief background; additional context and motivation can be found in the official documentation~\cite{Giskard}.

\section{Background}

Fully asynchronous communication severely restricts the guarantees a consensus protocol can provide~\cite{FLP}. Like the PBFT protocol~\cite{PBFT}, Giskard therefore assumes a partially synchronized mesh network~\cite{CPPS} which models the weak environment of public networks. This means that every network node is in principle able to communicate with every other node by message passing. Similarly to the HotStuff potocol~\cite{HotStuff}, Giskard separates production and confirmation of blocks, and requires blocks to pass multiple \emph{stages} before being confirmed as part to the blockchain.

\section{Participating Nodes}
Giskard is parameterized by a finite set $\mathcal{N}$ of participating nodes. These nodes can play two roles in the protocol: \emph{block proposers} and \emph{validators}. Block proposers are responsible for both producing new blocks and voting on blocks, while validators are only responsible for voting on proposed blocks.

\section{Epochs and Views} 
Giskard uses abstract units of time called epochs. The epoch changes once every 250 blocks~\cite{Giskard}. More precisely, epochs are a measure of time and progress in the protocol tracked by each participating node using an epoch number $e_n$ that is defined based on its local state. For every epoch $i$, an ordered list of $k$ nodes $N_i \subset \mathcal{N}$ is selected to participate in epoch $i$ by epoch $i-1$, using a Verifiable Random Function (VRF).
For each participating node, an epoch is defined as the duration of time required for $10*k$ blocks to reach local Prepare stage, $10*k-1$ blocks to reach local Precommit stage, and  $10*k-2$ blocks to reach local Commit stage. Consequently, epoch change happens asynchronously among participating nodes and depends on the status of their local blocks.

Epochs are further divided into views, tracked using a view number $v_n$. Similar to epochs, view number is defined for each participating node $n$ based on its local state, and the protocol does not guarantee that view change occurs at the same time for all participating nodes. Instead, each participating node updates its local view number $v_n$ depending on its local state. There are two ways in which a participating node can progress onto the next view: 
\begin{itemize}
    \item In the case of a normal view change: all 10 blocks produced by the block proposer reaches quorum locally, or
    \item In the case of an abnormal view change, i.e. a timeout: system timeout occurs, and nodes exchange messages and agree to increment their respective views. The timeout duration for each view is computed by an algorithm, and is known upfront to all participating nodes.
\end{itemize}

Views demarcate the duration of time in which a unique participating node acts as block proposer. Within each epoch, participating nodes from $N_i$ take turns to act as block proposer. Each epoch contains at least $k$ views; each participating node acts as block proposer at least once. If each participant $n \in N_i$ has acted as block proposer once and fewer than $10*k$ blocks reach global Prepare stage, the role of block proposer returns to the front of the list and continues to rotate until $10*k$ blocks reach global Prepare stage, marking the end of the epoch. 
\begin{definition}[Block proposer] 
Let $N_i$ be the list of $k$ participating nodes for epoch $i$. For node $n \in N_i$ with view $v_n = j$, the block proposer for view $j$ is $N_i[j~mod~k]$, i.e., the $(j~mod~k)$-th element of $N_i$. 
\end{definition}

One consequence of this definition is that if two participating nodes agree on the current view number, they agree on the identity of the block proposer. Another consequence is that nodes are always able to verify whether blocks are produced by the correct block proposer in a given view. Therefore, Giskard precludes malicious behavior such as nodes impersonating the block proposer and producing blocks out-of-turn, but does not preclude that the block proposer for a given view behaves maliciously, by, e.g., producing two blocks of the same height.

\section{Messages} 
Participating nodes communicate with one another by broadcasting messages over the network. There are two kinds of messages: consensus messages and synchronization messages. 

\subsection{Consensus messages} 
Consensus messages are the primary mechanism through which Giskard achieves consensus, and are used for block proposing, voting, and performing view changes. Consensus messages are always broadcast to all participating nodes from the sender. There are five kinds of consensus messages: \texttt{PrepareBlock}, \texttt{PrepareVote}, \texttt{ViewChange}, \texttt{PrepareQC} and \texttt{ViewChangeQC}. The following basic data are included in all consensus messages: 


\begin{description}
    \item[\texttt{e}]: the current epoch number
    \item[\texttt{v}]: the current view number
    \item[\texttt{n}]: the node index of the message sender
    \item[\texttt{b\_h}]: a block hash
    \item[\texttt{b\_n}]: the block number
    \item[\texttt{b\_i}]: the block index
    \item[\texttt{sig}]: a Boneh-Lynn-Shacham (BLS) aggregate signature
\end{description}

The current epoch and view numbers indicate when the message was produced. Messages are signed with the message sender's unique identifier, which in addition to the BLS signature allows the receiver to verify that message sender is authentic. As a convention, we use $n$ in place of $N_i[n]$ to refer to the participating node in $N_i$ with index $n$. All consensus messages carry blocks, which are identifiable by their block hash. The block number refers to the height of the block. The block index is view-specific and indicates the block's position in the sequence of blocks produced in that view. Nodes can identify whether a block is the last block using the block index: a block is the last block iff its index is 10. 

Because Giskard's safety properties primarily concern message blocks, views and senders, we abbreviate messages as, e.g., \texttt{PrepareBlock(b,v,n)}, omitting the remaining data fields. 

In addition to the above basic data, consensus messages also contain aggregated BLS signatures of other messages, and therefore can be interpreted as ``carrying'' other messages. The different kinds of message carrying behavior and a discussion of their consequences follow. 

\paragraph*{PrepareBlock.}
\texttt{PrepareBlock} messages contain the following BLS signatures: 
\begin{itemize} 
    \item \texttt{ParentQC}: the aggregated signature of quorum parent block \texttt{PrepareVote} messages, 
    \item \texttt{ViewChangeQC}: the aggregated signature of quorum \texttt {ViewChange} messages. 
\end{itemize} 
Therefore, \texttt{PrepareBlock(b,v,n)} can be viewed as a triple of messages: \texttt{PrepareBlock(b,v,n)}, \texttt{PrepareQC(parent(b),v’,n)}, and \texttt{ViewChangeQC(b\_max, v-1, n)}, where $v' \leq v$, $parent(b)$ is the parent block of $b$, and $parent(b)$ received quorum votes in view $v'$.

However, it is important to note that since new proposers produce blocks in a ``pipeline'' at the beginning of each view, i.e. one after another, the \texttt{ParentQC} field is somewhat of a misnomer for most blocks: It cannot both be the case that all \texttt{PrepareBlock} messages contain the \texttt{PrepareQC} signature of their parent block aggregated from quorum \texttt{PrepareVote} messages, and all \texttt{PrepareBlock} messages in a given view are produced at once, because the messages required for the aggregate do not yet exist for any blocks except the first. Therefore, the \texttt{ParentQC} field for all messages containing non-first blocks is effectively empty, i.e. set to a null placeholder. 


Similarly, the \texttt{ViewChangeQC} field only contains the signature of a \texttt{ViewChangeQC} message when the view change from $v-1$ to $v$ occurred via a timeout. In the case of a normal view change from $v-1$ to $v$, there are no \texttt{ViewChangeQC} or \texttt{ViewChange} messages broadcast amongst nodes. A description of the view change process is contained in the section below. 

\paragraph*{PrepareVote.}
\texttt{PrepareVote} messages contain the following BLS signature: 
\begin{itemize} 
    \item \texttt{ParentQC}: the aggregated signature of quorum parent block PrepareVote messages, 
\end{itemize} 
\texttt{PrepareVote(b,v,n)} can be interpreted as a pair of messages, namely, \texttt{PrepareVote(b,v,n)} and \texttt{PrepareQC(parent(b),v', n)}, where $v' \leq v$, $parent(b)$ is the parent block of $b$, and $parent(b)$ received quorum votes in view $v'$.


One consequence of \texttt{PrepareVote}'s message-carrying behavior is that if any node $n$ receives a \texttt{PrepareVote(b,v,\_)}, it can immediately reciprocate and send \texttt{PrepareVote(b,v,n)}. This is because the node ``extracts'' from the message a \texttt{PrepareQC(parent(b),\_,\_)} message, thus satisfying the criterion that nodes can only vote for blocks whose parents have reached the prepare stage.

To push this consequence even further, it is not the case that nodes necessarily vote for blocks in consecutive order, i.e. sending a \texttt{PrepareVote(b,v,n)} does not imply that $n$ has sent a \texttt{PrepareVote(parent(b),v,n)}. This is illustrated in the example below. 

\paragraph*{Example 1.} 
Let $A$, $B$, $C$ and $D$ be four nodes participating in the protocol, where blocks $b_1$, $b_2$ and $b_3$ have been proposed. $A$, $B$ and $C$ receive \texttt{PrepareBlock} messages for $b_1$, and broadcast \texttt{PrepareVote} messages for $b_1$. $A$, $B$ and $C$ then receive one another's \texttt{PrepareVote} messages for $b_1$, and respectively witness quorums for $b_1$, broadcasting \texttt{PrepareQC} for $b_1$ and \texttt{PrepareVote} for $b_2$. $A$ receives \texttt{PrepareVote(b2, B)} and \texttt{PrepareVote(b2, C)} first, and alongside its own \texttt{PrepareVote(b2, A)}, it witnesses a quorum for $b_2$ and broadcasts \texttt{PrepareQC (b2, A)}.
In the case that $D$ processes \texttt{PrepareQC(b2, A)} first, block $b_2$ reaches prepare stage in $D$'s local state, but $b_2$'s parent block, $b_1$, has not reached prepare stage in $D$'s local state, nor has $D$ itself voted for either $b_1$ or $b_2$.

However, it \emph{is} the case that all blocks prior to $b_n$ must have reached prepare stage globally in order for a participating to send \texttt{PrepareVote(b\_n,v,n)}, meaning that there must be evidence in the global out message buffer witnessing their prepare stage status, either in the form of quorum \texttt{PrepareVote} messages or a \texttt{PrepareQC} message. This illustrates that Giskard's voting criterion concerns what messages have been received, but not what messages have been sent.

\paragraph*{ViewChange.}
\texttt{ViewChange} messages contain the following BLS signature: 
\begin{itemize} 
    \item \texttt{PrepareQC}: the aggregated signature of quorum \texttt{PrepareVote} messages for the block contained in the message.
\end{itemize} 
\texttt{ViewChange(max\_b,v,n)} can  therefore be interpreted as a message pair \texttt{ViewChange(max\_b,v,n), PrepareQC(max\_b,v',n)}. Note that as it is not guaranteed that \texttt{max\_b} reached prepare stage in the present view \texttt{v}, \texttt{v'=v} is not guaranteed. 

In the following sections, we describe transmission of ``carried'' messages separately, for clarity. 

\section{States} 
Each node maintains its own \textit{ViewState} $s$, i.e., a snapshot of the current protocol state. The \textit{ViewState} contains the following basic information: 

\begin{description}
    \item[$e$]: the current epoch number
    \item[$v_n$]: the current view number
    \item[$n_i$]: the node's unique identifier
    \item[$l_{in}$]: the input message buffer
    \item[$l_{pending}$]: the pending message buffer
    \item[$l_{counting}$]: the processed message buffer
    \item[$l_{out}$]: the output message buffer
    \item[$t_n$]: a local clock
\end{description}
The local clock is synchronized across all participating nodes. 

\section{Blocks}
Participating nodes in Giskard ultimately aim to achieve distributed consensus on a blockchain, namely a sequence of blocks containing domain-specific data, such as transactions. Each participating node maintains their local version of the blockchain. Consensus means that each participating node sees the same version of the blockchain, and consensus safety is stated in terms of three node-local block properties: prepare stage, precommit stage and commit stage.

\begin{definition}[Prepare stage in view] 
We say that block $b$ is in prepare stage in view $v$ for node $n$, denoted $PrepareinView(b,n,v)$, if
\begin{align*}
|\{msg~|~msg \in l_{counting}~ \land~ & \exists n \in N_v, msg = \mathtt{PrepareVote(b,n,v)}\}| \geq 2/3 * k \\ 
    \textrm{or}~ & \exists n \in N_v, \mathtt{PrepareQC(b,n,v)} \in l_{counting}.
\end{align*}
\end{definition}
A block is in Prepare stage for some node in some view if the node has processed a quorum of \texttt{PrepareVote} messages for that block sent in the view, or the node has processed a \texttt{PrepareQC} message for that block sent in the view. For notational clarity in later proofs, we refer to the first case as $VoteQuorum(b, v, n)$ and the second case as $QC(b, v, n)$.

\begin{definition}[Prepare stage] 
We say that block $b$ is in Prepare stage for node $n$, denoted $Prepare(b, n)$ if there exists a $v'$ such that $v' \leq v$ and $b$ is in Prepare stage for node $n$ in view $v$, where $v$ denotes the node's current view. 
\end{definition}

\begin{definition}[Precommit stage]
We say that block $b$ is in Precommit stage for node $n$, denoted $Precommit(b,n)$ if block $b$ is in Prepare stage and it has a child block that is also in Prepare stage.
\end{definition}

\begin{definition}[Commit stage]
We say that block $b$ is in Commit stage for node $n$, denoted $Commit(b,n)$ if block $b$ is in Precommit stage and it has a child block that is in Precommit stage.
\end{definition}

Sometimes, we need to refer to properties that refer to the global protocol state. We say that a block is in \emph{global} prepare (respectively, precommit and commit) stage if there exists a participating node such that the block is in prepare (respectively, precommit and commit) stage for that node. 

\section{Fault Model}
Giskard is fault-tolerant with respect to two kinds of faults: 1) non-Byzantine network errors and 2) Byzantine node behavior. Some examples of faults of the latter kind are when a node:
\begin{itemize}
    \item proposes more blocks than each view permits,
    \item proposes multiple blocks with the same height,
    \item votes for blocks whose \texttt{PrepareBlock} message has not been received, or
    \item votes for two blocks with the same height in the same view.
\end{itemize}
Giskard assumes that no more than one third of the participating nodes at any time in the protocol are Byzantine. 


\section{Consensus Protocol} 
At a glance, participating nodes in Giskard process messages that are delivered to their input buffers by the network, and broadcast messages to other nodes. Upon processing a message, the participating node removes it from the input message buffer and adds it to the counting message buffer. Upon sending a message, the participating node adds the message to its outgoing message buffer. We first describe the block proposer's actions, because they have special responsibilities in each view. Protocol behavior of participating nodes depends on 1) whether the view has timed out, and 2) whether the node is a block proposer. We describe behaviors of block proposers and validators during non-timeout and timeout, respectively. 

\subsection{Non-timeout period}

\paragraph*{Block proposers.}
Upon entering a new view either through normal or abnormal view change, the block proposer for the view broadcasts 10 \texttt{PrepareBlock} messages and one \texttt{PrepareVote} message for the block contained in the first \texttt{PrepareBlock} message. The block contained in the first \texttt{PrepareBlock} message is the child block of the carryover block from the previous view. We postpone the definition of the carryover block until the end of this section. Each \texttt{PrepareBlock} and \texttt{PrepareVote} message carries the \texttt{PrepareQC} message of the carryover block from the previous view. Note that although PrepareBlock messages contain a field labeled \texttt{ParentQC}, this name is only accurate for the first \texttt{PrepareBlock} message of each view. For all other blocks, the field is more accurately named ``GrandparentQC'' or ``AncestorQC'' etc. Furthermore, note that although all 10 block proposals occur sequentially and the production of the latter 9 blocks do not depend on the prepare status of their parent, block voting does depend on the prepare status of parent blocks. 

After block proposal, the block proposer for the view behaves like a regular validator, performing all of the validator actions described below. 

\paragraph*{Validators.} 
For every message received, the validator checks 1) block validity and 2) view validity. Checking block validity involves executing the block, i.e. replaying each transaction in the block, and verifying that the root hash of the state data is consistent with the block header. Checking view validity involves checking that the message is signed by the same view number as the local view number of the receiver. Validators do not process messages with invalid blocks, nor do they process messages that were not produced in the current view. We include these two checks implicitly in all of the message processing actions below, and therefore omit them from the descriptions. 

Validator nodes receive and process three kinds of messages during the normal period: \texttt{PrepareBlock}, \texttt{PrepareVote} and \texttt{PrepareQC}. 

\paragraph*{PrepareBlock.} 
Upon receiving \texttt{PrepareBlock(b,i,v)} message, validator $j$ does the following: 
\begin{itemize}
    \item Check for the existence of a message containing a different block $b'$ with the same height in $ls_b$. In the case that such a block exists, discard the message. 
    \item In the case that no message containing a block with the same height has been seen,  Check whether any of the following conditions hold: 
    \begin{itemize}
        \item there exists $v'$ such that $v'\leq v$ and a quorum of \texttt{PrepareVote(b\_parent,\_,v')} messages exist in $l_{counting}$, 
        \item there exists $v'$ such that $v'\leq v$ and a \texttt{PrepareQC(b\_parent,\_,v')} message exists in $l_{counting}$, 
        \item \texttt{PrepareBlock(b,i,v)} contains the \texttt{PrepareQC} signature of $b_{parent}$, where $b_{parent}$ is the parent block of $b$. 
    \end{itemize}
    If any of the above conditions hold, then broadcast \texttt{(PrepareVote(b,j,v)} and store it in $l_{out}$. If none of the above conditions hold, store \texttt{(PrepareVote(b,j,v)} in $l_{pending}$. 
\end{itemize}
\noindent

\paragraph*{PrepareVote.} 
Upon receiving a \texttt{PrepareVote(b,i,v)} message, validator $j$ does the following: 
\begin{itemize} 
    \item First, check whether it has sent a \texttt{PrepareVote(b,j,v)} message. If it has not, send \texttt{PrepareVote(b,j,v)}. 
    \item Next, determine whether it has seen enough votes for block $b$. 
    \begin{itemize}
        \item If $count_b+1<N-f$, increment $count_b := count_b + 1$. 
        \item If $count_b+1=N-f$, check whether $b$ is the last block in the present view. If it is, increment view number $v_n := v_n + 1$ and broadcast \texttt{PrepareQC(b, j,v)}. If not, check for the existence of a \texttt{PrepareVote(b\_child,j,v)} message in $l_{pending}$, where $b_{child}$ is the child block of $b$. If such a message exists, send it and \texttt{PrepareQC(b, j,v)}. 
    \end{itemize}
    In the above, $count_b$ is the number of distinct \texttt{PrepareVote(b,\_,v)} messages in $l_{counting}$.
\end{itemize} 

Note that when processing \texttt{PrepareVote} messages, validators no longer need to check for the existence of a \texttt{PrepareVote} message containing a different block of the same height. 

\paragraph*{PrepareQC.} 
Upon receiving a \texttt{PrepareQC(b,i,v)} message, validator $j$ does the following: 
\begin{itemize} 
    \item First, check whether $b$ is the last block in the present view. If it is, increment the view number, $v_n := v_n + 1$. 
    \item Second, if $b$ is not the last block in the present view, check for the existence of a message \texttt{PrepareVote(b\_child,j,v)} in $l_{pending}$, where $b_{child}$ is the child block of $b$. If such a message exists, send it. 
\end{itemize} 

As described above, a normal view change occurs when the last block proposed by the block proposer reaches prepare stage in the current view, by either receiving the final \texttt{PrepareVote} message required for quorum or a \texttt{PrepareQC} message. The maximum number of blocks proposed during each view is common knowledge to all participating nodes, therefore nodes are always able to identify the last block when they see it. 

\subsection{Timeout period} 
An abnormal view change occurs when the view expires and a timeout occurs before all proposed blocks are able to enter prepare stage. Unlike normal view change, which depends on the node-local blockchain, a timeout occurs simultaneously for each node because all node-local clocks are synchronized. 


\paragraph*{Timeout duration calculation.} 
 The method for calculating the timeout duration is fixed but unimportant to consensus safety; therefore, we include it as an appendix.

Once a node's local view times out, it enters a liminal period in which only \texttt{ViewChange}, \texttt{ViewChangeQC} and \texttt{PrepareQC} messages can be processed and sent. Current block proposers, validators and to-be block proposers perform the same behaviors during this timeout period.

\paragraph*{ViewChange.} Upon timeout, node $j$ does the following: 
\begin{itemize} 
    \item Calculate the local highest prepare stage block $b_{highest}$, whereby $b_{highest}$ is in prepare stage for $j$, and for all $b'$ in prepare stage for $j$, $height(b') \leq height(b_{highest})$.
    \item Send \texttt{ViewChange(b\_highest,j,v)}.
    \item In the case that \texttt{b\_highest} reached prepare stage in a past view \texttt{v'}, with $\texttt{v'}<\texttt{v}$ and a \texttt{PrepareQC(b\_highest,\_,v')} message or a quorum of \texttt{PrepareVote(b\_highest,\_,v')} messages exists in $l_{counting}$, send \texttt{PrepareQC(b\_highest,j,v')}. In the case that \texttt{b\_highest} reached prepare stage in the current view \texttt{v}, send \texttt{PrepareQC(b\_highest,j,v)}. 
\end{itemize} 
Upon receiving a \texttt{ViewChange(b,i,v)} message, node $j$ does the following: 
\begin{itemize} 
    \item Determine whether it has seen enough \texttt{ViewChange} messages for view $v$. 
    \begin{itemize}
        \item If $count_v+1<N-f$, increment $count_v := count_v + 1$. 
        \item If $count_v+1=N-f$, increment view number $v_n := v_n + 1$ and calculate the maximum height block $b_{max}$ of all the blocks contained in the quorum \texttt{ViewChange} messages, then broadcast \texttt{ViewChangeQC(b\_max, j,v)}.
    \end{itemize}
    where $count_v$ is defined as the number of distinct \texttt{ViewChange(b,\_,v)} messages in $l_{counting}$.
\end{itemize} 

Note that from the definition of local prepare stage blocks, it is possible for the local highest prepare stage block, and in turn, even the maximum height block, to have been produced and voted upon in a past view.

\paragraph*{ViewChangeQC.}
Upon receiving a \texttt{ViewChangeQC(b\_highest,i,v)} message and its accompanying \texttt{PrepareQC(b\_highest,i,v)}, validator $j$ does the following: 
\begin{itemize}
    \item Increment view number $v_n := v_n + 1$.
\end{itemize}

\paragraph*{PrepareQC.} 
Upon receiving a \texttt{PrepareQC(b,i,v)} message in the timeout period, validator $j$ does the following: 
\begin{itemize} 
    \item Check whether $b$ is the last block in the present view. If it is, increment $count_v := count_v + 1$.  
\end{itemize} 
Participating nodes process \texttt{PrepareQC} messages in the timeout period, but do not cast votes. 

Finally, we can define a carryover block as follows. 
\begin{definition}[Carryover block]
A carryover block $b$ for node $n$ in view $v$ is a bock such that:
\begin{align*}
    \textrm{$b$ is the last block in $v$ and $b$ is in prepare stage}\\ 
    \textrm{or}~\exists~\mathtt{ViewChangeQC(b,\_,v)} \in l_{counting}.
\end{align*}
\end{definition}

\paragraph*{Synchronization mechanism.} 
Giskard includes a synchronization mechanism which involves nodes exchanging messages one-to-one, that allows nodes to update each other's local blockchain. Similar to the timeout duration calculation method, the synchronization mechanism is not important to consensus safety, and is therefore outlined in the appendix.

\section{Safety Properties and Proofs}
Safety intuitively states that the consensus protocol works as intended, i.e., that all participating nodes reach a consensus on the global blockchain based on their local blockchains. More precisely, we wish to show that any two same height blocks in Prepare stage in the same view, in Precommit stage, or in Commit stage must be the same for all participating nodes.

The first safety property states that no two blocks of the same height can be at Prepare stage in the same view, i.e., Prepare stage block height is injective in the same view.
\begin{theorem}
For all $n,m,b,b',v,v'$ such that $PrepareinView(b, v, n)$ and $PrepareinView(b', v', m)$, if $v = v'$ and $height(b) = height(b')$, then $b = b'$. 
\end{theorem}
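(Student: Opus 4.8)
The plan is to collapse both hypotheses to the assertion ``a quorum of nodes in $N_v$ broadcast a $\mathtt{PrepareVote}$ message for the block in view $v$'' (using $v = v'$), and then run a standard quorum-intersection argument: two such quorums overlap in an honest node, and an honest node never votes for two blocks of the same height in the same view, so the two blocks must coincide. The identities $n,m$ play no role, so this is really a statement about two Prepare-stage witnesses in one fixed view $v$.

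First I would unfold $PrepareinView(b, v, n)$. By definition it is $VoteQuorum(b, v, n)$ or $QC(b, v, n)$. In the $VoteQuorum$ case, $n$'s $l_{counting}$ contains at least $2/3 * k$ distinct $\mathtt{PrepareVote}$ messages for $b$ in view $v$, whose senders form a set $S_b \subseteq N_v$ with $|S_b| \geq 2/3 * k$; all of these messages were genuinely broadcast, since a node only moves a message into $l_{counting}$ after processing a message the network delivered to it. In the $QC$ case, $n$'s $l_{counting}$ contains a $\mathtt{PrepareQC}$ message for $b$ in view $v$; here I invoke the message-carrying semantics of $\mathtt{PrepareQC}$, namely that such a message is the BLS aggregate of a quorum of $\mathtt{PrepareVote}$ messages for $b$ in view $v$, so by unforgeability of BLS aggregates its existence again certifies a set $S_b \subseteq N_v$ of at least $2/3 * k$ distinct senders of $\mathtt{PrepareVote}$ for $b$ in view $v$. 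Applying the same reasoning to $PrepareinView(b', v, m)$ yields a set $S_{b'} \subseteq N_v$ with $|S_{b'}| \geq 2/3 * k$ of senders of $\mathtt{PrepareVote}$ for $b'$ in view $v$.

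Next comes the counting step. Writing $q$ for the quorum threshold (the definition of $PrepareinView$ writes it as $2/3 * k$, the protocol as $N - f$), the standing assumption that at most one third of the participating nodes are Byzantine gives $2q - k > f$; together with $|N_v| = k$ and inclusion--exclusion this yields $|S_b \cap S_{b'}| \geq |S_b| + |S_{b'}| - k \geq 2q - k > f$. Since at most $f$ nodes are Byzantine, $S_b \cap S_{b'}$ contains an honest node $h$. This $h$ broadcast both $\mathtt{PrepareVote}(b,h,v)$ and $\mathtt{PrepareVote}(b',h,v)$; as $height(b) = height(b')$ and honest nodes, by the fault model, do not vote for two blocks of the same height in the same view, we conclude $b = b'$, which is the claim.

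The step I expect to be the main obstacle is pinning down precisely what the existence of a $\mathtt{PrepareQC}$ message buys us: one must argue that a well-formed $\mathtt{PrepareQC}$ for $b$ in view $v$ necessarily certifies $q$ genuine, view-$v$, distinct-sender $\mathtt{PrepareVote}$ messages for $b$ --- so that a Byzantine aggregator can neither forge votes nor relabel their view --- and, symmetrically, that when a block reaches Prepare stage via a direct vote quorum those votes were really sent by distinct members of $N_v$. A secondary, mildly delicate point is the quorum arithmetic when $k$ is not a multiple of $3$: I sidestep the rounding by working with the abstract inequality $2q - k > f$ (equivalently $k \geq 3f + 1$ with $q = k - f$) rather than the literal value $1/3 * k$, which guarantees the honest node in the overlap in all cases. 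Everything else is routine unfolding of the definitions of $PrepareinView$, $VoteQuorum$, and $QC$ together with the appeal to honest-node behavior.
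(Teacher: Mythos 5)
Your proof is correct and follows essentially the same route as the paper's: reduce both the $VoteQuorum$ and $QC$ cases to the existence of a quorum of distinct $\mathtt{PrepareVote}$ senders, intersect the two quorums to obtain more than $f$ common voters, and contradict the fault assumption via a node that voted for two distinct same-height blocks in one view. The only cosmetic difference is that you collapse the paper's four-way case split into a uniform argument and justify the $QC$ case via BLS unforgeability rather than via the local state of the $\mathtt{PrepareQC}$ sender.
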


\begin{proof}
Let $3f+1$ be the number of participating nodes in the protocol. For the sake of yielding a contradiction, assume that $b~\neq~b'$. We proceed by case analysis on $PrepareinView(b, v,n)$, $PrepareinView(b', v',m)$:
\begin{itemize} 
\item In the case that $VoteQuorum(b, v,n), VoteQuorum(b', v',m)$: let $V, V'$ be the set of participating nodes that sent \(\bfunc{PrepareVote}\) messages to $n$ and $m$ respectively. Because $|V| \geq 2f+1, |V'| \geq 2f+1$, by the pigeonhole principle, $|V \cap V'| \geq 2*(2f + 1) - (3f + 1) = f + 1$. Therefore, more than one third participating nodes voted for two different blocks of the same height, and are malicious nodes. By assumption, there are no more than one third participating nodes in the protocol that are malicious, therefore we reach a contradiction.
    \item In the case that $VoteQuorum(b, v, n)$, $QC(b', v', m)$, let $m'$ be the sender of the \(\bfunc{PrepareQC}\) message. Because $m'$ sent \(\bfunc{PrepareQC}\), it must be the case that $VoteQuorum(b', v', m')$. Let $V, V'$ be the set of participating nodes that sent \(\bfunc{PrepareVote}\) messages to $n$ and $m$ respectively. Following the above reasoning, we reach a contradiction. 
    \item In the case that $QC(b, v, n)$, $VoteQuorum(b', v', m)$, we have a symmetric case to the above. 
    \item In the case that $QC(b, v, n), QC(b', v', m)$, let $m, m'$ be the senders of the \(\bfunc{PrepareQC}\) messages respectively. Because $m$ sent \(\bfunc{PrepareQC}\), we must have $VoteQuorum(b, v, m)$. Similarly, we have $VoteQuorum(b', v', m')$. Let $V, V'$ be the set of participating nodes that sent \(\bfunc{PrepareVote}\) messages to $m$ and $m'$ respectively. Following the above reasoning, we reach a contradiction. 
\end{itemize} 
Therefore, because we reach a contradiction for all cases assuming $b \neq b'$, it must be the case that $b = b'$.
\end{proof}

While the first safety property only concerns the behavior of participating nodes and the status of local blocks within a given view, the next safety property concerns blocks across all views. The proofs of the second safety property, which states that no two blocks of the same height can be at Precommit stage, i.e., Precommit stage block height is injective, critically relies on the relationship between view number and block height, for a given node. Before giving a characterization of this relationship for an arbitrary number of views, we first look at the relationship between two consecutive views to provide some intuition. We first define the following kind of special block. 
\begin{definition}(Highest Prepare block)
For some view $v$, we say that $b$ is the highest Prepare block in $v$, denoted $HighestPrepareBlock(b,v)$,  iff
\begin{align*}
\exists~m.~Prepare(b,v,m) \land \forall~b',n.~ Prepare(b',v,n)\implies height(b') \leq height(b).
\end{align*}
\end{definition}

In other words, the highest Prepare block in a view is the highest block at Prepare stage for \textit{some} participating node in the epoch. Note that because the definition uses $Prepare$ and not $PrepareinView$, the highest Prepare block in a view may not necessarily have been voted for in that view itself. We denote the height of such a block as $MaxHeight(v)$. 

In Section 4, we described two kinds of view change: normal and abnormal, or timeout. Next, we characterize these two kinds of view change, with attention to the height of local Prepare stage blocks. 

\begin{lemma}[Abnormal view change block height]
\label{lem:ab-height}
For every \texttt{ViewChangeQC(b,v,\_)} message, $b$ is either the highest Prepare block in $v$ or the second highest Prepare block in view $v$, i.e., $height(b) = MaxHeight(v) \vee height(b) = MaxHeight(v)-1$.
\end{lemma}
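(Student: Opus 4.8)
The plan is to prove the two bounds $height(b) \le MaxHeight(v)$ and $height(b) \ge MaxHeight(v) - 1$ separately. The upper bound is immediate: a \texttt{ViewChangeQC} message for a block $b$ in view $v$ is broadcast only after some node collects a quorum of \texttt{ViewChange} messages sent in view $v$, and $b$ is chosen as the maximum-height block carried by those messages; in particular $b$ equals the local highest Prepare block reported by some node $i$ that sent its \texttt{ViewChange} message while in view $v$, so $Prepare(b, v, i)$ holds and hence $height(b) \le MaxHeight(v)$ by definition of $MaxHeight$.

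For the lower bound, fix the node $j$ that broadcast the \texttt{ViewChangeQC} after collecting a set $Q$ of $2f+1$ distinct \texttt{ViewChange} senders for view $v$, where $3f+1$ is the number of participating nodes. Let $b^\ast$ be a block with $HighestPrepareBlock(b^\ast, v)$, witnessed by a node $n^\ast$ whose current view is $v$, so $height(b^\ast) = MaxHeight(v)$. Unfolding $Prepare(b^\ast, n^\ast)$ exactly as in the proof of Theorem~1 --- splitting on $VoteQuorum$ versus $QC$, and in the $QC$ case observing that the sender of the \texttt{PrepareQC} must itself have witnessed a vote quorum for $b^\ast$ --- yields a set $V^\ast$ of at least $2f+1$ nodes that each sent a \texttt{PrepareVote} for $b^\ast$ in one fixed view $v' \le v$. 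By the pigeonhole principle $|Q \cap V^\ast| \ge 2(2f+1) - (3f+1) = f+1$, so $Q \cap V^\ast$ contains at least one honest node $p$.

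The heart of the argument is then to show that the local highest Prepare block $b_p$ reported in $p$'s \texttt{ViewChange} message for view $v$ satisfies $height(b_p) \ge height(b^\ast) - 1$. Since $p$ is honest and sent a \texttt{PrepareVote} for $b^\ast$, it cast that vote while processing either a \texttt{PrepareBlock} message or a \texttt{PrepareVote} message for $b^\ast$. In either case, the protocol's voting precondition together with the message-carrying behavior of those messages forces $parent(b^\ast)$ to be in Prepare stage for $p$ by the moment the vote is cast: from a \texttt{PrepareVote} message, $p$ extracts and processes the accompanying \texttt{PrepareQC} of $parent(b^\ast)$; from a \texttt{PrepareBlock} message, one of the three enabling conditions must have held, each of which either already places a vote quorum or a \texttt{PrepareQC} for $parent(b^\ast)$ in $l_{counting}$, or supplies a genuine \texttt{PrepareQC} of $parent(b^\ast)$ that $p$ then processes. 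Because an honest node no longer casts \texttt{PrepareVote} messages once its view has timed out, this all takes place before $p$ times out of view $v$; and since $l_{counting}$ only grows and the current view only increases, $parent(b^\ast)$ is still in Prepare stage for $p$ when $p$ sends its \texttt{ViewChange} message for view $v$. Hence $height(b_p) \ge height(parent(b^\ast)) = MaxHeight(v) - 1$, and since $b$ is the maximum-height block among the \texttt{ViewChange} messages collected by $j$ and $p \in Q$, we conclude $height(b) \ge height(b_p) \ge MaxHeight(v) - 1$. The degenerate case in which $b^\ast$ is the genesis block, with no parent, is immediate.

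I expect the main obstacle to be the timing and monotonicity bookkeeping in the last step: carefully arguing that the Prepare-stage status of $parent(b^\ast)$ for the honest node $p$ is established \emph{before} $p$ times out of view $v$, so that it is genuinely reflected in the $b_{highest}$ field of $p$'s \texttt{ViewChange} message, together with a clean enumeration of the ways $p$ could have been led to vote for $b^\ast$ --- in particular respecting the ``null \texttt{ParentQC}'' convention for non-first blocks, which makes the third \texttt{PrepareBlock} enabling condition available only for the first block of a view.
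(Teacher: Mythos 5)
Your proposal is correct and follows essentially the same route as the paper's proof: both arguments take the $2f+1$ voters behind the highest Prepare block $b^\ast$, use the voting precondition to place $parent(b^\ast)$ in local Prepare stage for each of them, and intersect that set with the \texttt{ViewChange} quorum via the pigeonhole principle to force a reported block of height at least $MaxHeight(v)-1$. Your version is somewhat more careful than the paper's --- it adds the (easy) upper bound, explicitly extracts an honest node from the $f+1$-element intersection, handles the $QC$ case of Prepare stage, and attends to the timing of when $parent(b^\ast)$ enters $l_{counting}$ --- but these are refinements of the same argument rather than a different one.
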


\begin{proof} 
Suppose $HighestPrepareBlock(b,v)$ for view $v$, and $MaxHeight(v) = height(b)$. By definition of $HighestPrepareBlock$, there exists some node $n$ such that $Prepare(b,v,n)$. By definition of Prepare stage, we know that at least 2/3 participating nodes cast votes for $b$. By the voting rules we know that nodes can only vote for blocks whose parent block has reached Prepare stage locally. Let $b'$ be the parent block of $b$, therefore, $b'$ is at Prepare stage for at least 2/3 participating nodes, and the local highest Prepare block for these nodes must have height at least $height(b') = height(b) - 1$. By the pigeonhole principle, any quorum of \texttt{ViewChange} messages containing nodes' local highest Prepare blocks must contain at least 1/3 messages containing a block at least height $height(b')$. Therefore, the aggregated maximum height block from any quorum of \texttt{ViewChange} messages must be height at least $height(b') = height(b) - 1 = MaxHeight(v) - 1$. 
\end{proof} 

Intuitively, Lemma~\ref{lem:ab-height} is true because global Prepare stage does not mean local Prepare stage for more majority nodes. We can thus split the abnormal or timeout view change case into two cases based on the height of the block contained in the \texttt{ViewChangeQC} message, making three cases in total. We use an example to depict the local blockchains corresponding to the three view change cases in Figure 1. Blocks are named according to their block proposer and height, and we conventionally refer to the view in which node $A$ is block proposer as view $A$. Blocks depicted are blocks in Prepare stage in view for a particular participating node. 

\paragraph*{Example 2.} 
Let $A$ be a participating node whose current view number is $1$. Let $B$ be the block proposer for view $1$ and $C$ be the block proposer for view $2$. Suppose that three blocks are produced in each view, such that a total of three blocks have been produced thus far: $b^{B}_1$, $b^{B}_2$, and $b^{B}_3$. According to the two view change scenarios described in Section 4, there are two possibilities for when $A$ will perform a view change and update its local view number to $2$: 
\begin{itemize}
    \item All three blocks have reached Prepare stage in $A$'s current view, and $A$ performs a view change automatically, 
    \item A timeout occurs and fewer than three blocks have reached Prepare stage, and $A$ receives a \texttt{ViewChangeQC} message. 
\end{itemize}

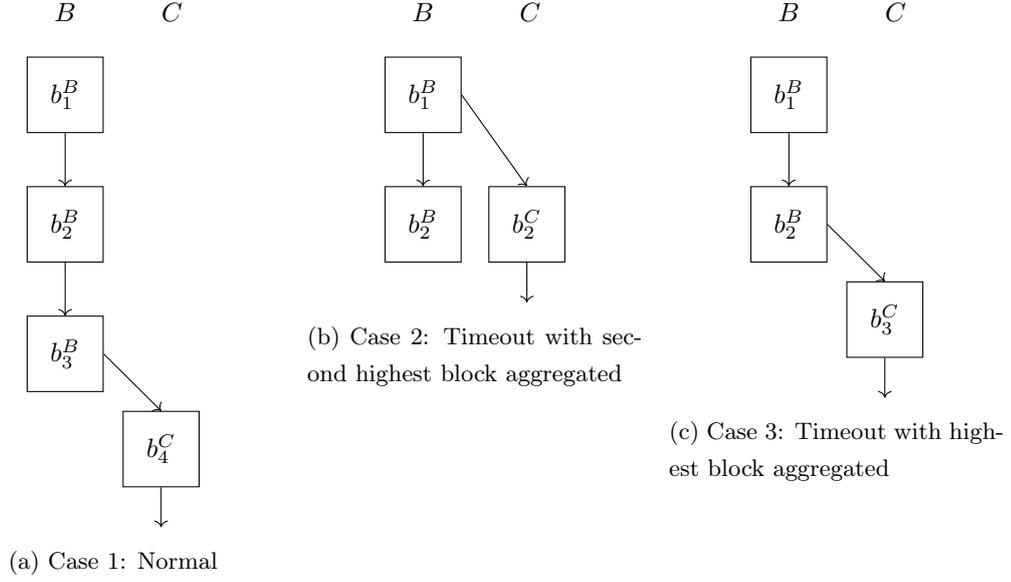
\begin{figure}[ht!]
 \centering
 \captionsetup[subfigure]{position=b}
 \begin{subfigure}[t]{.3\textwidth}
    \centering
    \begin{tikzpicture}[baseline]
     \node (BL) {$B$};
     \node[right=25pt of BL] (CL) {$C$};
     \node[mysquare, below=10 pt of BL] (F) {$b^{B}_1$};
     \node[mysquare, below=20 pt of F] (G) {$b^{B}_2$};
     \node[mysquare, below=20 pt of G] (H) {$b^{B}_3$};
     \node[mysquare, below right=10 pt of H] (I) {$b^{C}_4$};
     \draw[onearrow={1}{}]
        (G.south) -- (H.north);
     \draw[onearrow={1}{}]
        (F.south) -- (G.north);
     \draw[onearrow={1}{}]
        (H.east) -- (I.north);
      \draw[onearrow={1}{}]
        (I.south) -- ([yshift=-15pt]I.south);
    \end{tikzpicture}
    \subcaption{Case 1: Normal}
  \end{subfigure}\quad
  \begin{subfigure}[t]{.3\textwidth}
    \centering
    \begin{tikzpicture}[baseline]
     \node (BL) {$B$};
     \node[right=25pt of BL] (CL) {$C$};
     \node[mysquare, below=10 pt of BL] (F) {$b^{B}_1$};
      \node[mysquare, below=20 pt of F] (G) {$b^{B}_2$};
      \node[mysquare, right=10 pt of G] (H) {$b^{C}_2$};
      \draw[onearrow={1}{}]
        (F.south) -- (G.north);
      \draw[onearrow={1}{}]
        (F.east) -- (H.north);
      \draw[onearrow={1}{}]
        (H.south) -- ([yshift=-15pt]H.south);
    \end{tikzpicture}
    \subcaption{Case 2: Timeout with second highest block aggregated}
  \end{subfigure}\quad
  \begin{subfigure}[t]{.3\textwidth}
    \centering
    \begin{tikzpicture}[baseline]
     \node (BL) {$B$};
     \node[right=25pt of BL] (CL) {$C$};
     \node[mysquare, below=10 pt of BL] (F) {$b^{B}_1$};
      \node[mysquare, below=20 pt of F] (G) {$b^{B}_2$};
      \node[mysquare, below right=10 pt of G] (H) {$b^{C}_3$};
      \draw[onearrow={1}{}]
        (F.south) -- (G.north);
      \draw[onearrow={1}{}]
        (G.east) -- (H.north);
      \draw[onearrow={1}{}]
        (H.south) -- ([yshift=-15pt]H.south);
    \end{tikzpicture}
    \subcaption{Case 3: Timeout with highest block aggregated}
  \end{subfigure}
  \caption{Three view change cases.}
\end{figure}

We can now state the following fact on heights of Prepare stage blocks in consecutive views.
\begin{lemma} [Consecutive view Prepare stage block height]
For all $n,m,b,b',v$ such that\newline $PrepareinView(b, v, n)$, $PrepareinView(b', v+1, m)$, then $height(b) \leq height(b')$. 
\end{lemma}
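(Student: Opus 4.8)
The plan is to pin the height of any block that is in Prepare stage in a view to an interval determined by that view's proposer, and then to compare the intervals for view $v$ and view $v+1$ using a case split on the kind of view change that connects them. Throughout, write $c_w$ for the carryover block of view $w$ (which exists whenever view $w$ was actually entered, in particular for $w = v+1$) and $b_w^{\mathrm{last}}$ for the last block proposed in view $w$.

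First I would prove the auxiliary claim: if $PrepareinView(c,w,p)$, then $c$ is one of the blocks proposed by the block proposer of view $w$, so that $height(c_w) < height(c) \le height(b_w^{\mathrm{last}})$. In both disjuncts of $PrepareinView$ the witness is a view-$w$ \texttt{PrepareVote} or \texttt{PrepareQC} message for $c$ in $p$'s counting buffer (and a \texttt{PrepareQC(c,\_,w)} is issued only after its sender observed a view-$w$ vote quorum for $c$). Following the chain of messages that license these votes -- a validator emits \texttt{PrepareVote(c,j,w)} only after seeing another such vote or the \texttt{PrepareBlock(c,\_,w)} that carries, directly or transitively, evidence that $c$'s parent reached Prepare stage -- together with the view-validity check that validators apply to every processed message, one reaches \texttt{PrepareBlock(c,\_,w)} broadcast by the proposer of view $w$, and one sees that the parent chain of $c$ descends to $c_w$. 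Since that proposer produces, in a pipeline, the child of $c_w$ followed by its successive descendants, the claimed height bounds follow. Instantiating at $b$ in view $v$ gives $height(b) \le height(b_v^{\mathrm{last}})$; instantiating at $b'$ in view $v+1$ gives $height(b') \ge height(c_{v+1}) + 1$.

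It then remains to bound $height(b_v^{\mathrm{last}})$ against $height(c_{v+1})$, which I would do by cases on the view change $v \to v+1$. If it is normal, then $c_{v+1} = b_v^{\mathrm{last}}$ by the definition of carryover block, so $height(b) \le height(b_v^{\mathrm{last}}) = height(c_{v+1}) < height(c_{v+1}) + 1 \le height(b')$. If it is a timeout, then $c_{v+1}$ is the block carried by some \texttt{ViewChangeQC} message (with view tag $v$) that the view-$(v+1)$ proposer processed, and Lemma~\ref{lem:ab-height} gives $height(c_{v+1}) \ge MaxHeight(v) - 1$; since $PrepareinView(b,v,n)$ implies $Prepare(b,v,n)$, the definition of $MaxHeight$ gives $height(b) \le MaxHeight(v) \le height(c_{v+1}) + 1 \le height(b')$. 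In either case $height(b) \le height(b')$.

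I expect the auxiliary claim of the second paragraph to be the main obstacle: making precise that view-$w$ \texttt{PrepareVote}/\texttt{PrepareQC} evidence for $c$ can exist only if $c$ was proposed in view $w$, and that the parent chain of $c$ bottoms out at $c_w$, requires a careful walk through the message-carrying and message-triggering rules -- in particular the reciprocation rule for \texttt{PrepareVote}, the requirement that honest votes are licensed only by a Prepare-stage parent, and the fact that validators never process a message whose view disagrees with their own -- so as to exclude any route by which a block from an earlier view could pick up a view-$w$ endorsement. Granting the claim, the remaining height bookkeeping and the appeal to Lemma~\ref{lem:ab-height} are routine.
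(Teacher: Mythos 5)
The paper states this lemma without any proof, so there is no official argument to compare yours against; what you have written is a proof of exactly the shape that the surrounding text gestures at (the case split on normal versus timeout view change, Figure 1, and Lemma~\ref{lem:ab-height}). Your height bookkeeping is correct in both cases: in the timeout case $height(b) \leq MaxHeight(v) \leq height(c_{v+1})+1 \leq height(b')$ via Lemma~\ref{lem:ab-height}, and in the normal case the carryover block is the last block of view $v$. So the skeleton is sound and, as far as one can tell, is the intended one.

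The genuine gap is in your auxiliary claim, and it is larger than you acknowledge. You reduce everything to: any block with view-$w$ Prepare evidence is one of the blocks the view-$w$ proposer built, in a pipeline, on top of $c_w$, whence $height(c_w) < height(c) \leq height(b_w^{\mathrm{last}})$. Your justification traces votes back to the proposer's \texttt{PrepareBlock} and then asserts that the parent chain of $c$ descends to $c_w$. That last step holds only if the view-$w$ proposer follows the protocol, but the paper's fault model explicitly permits a Byzantine proposer (it ``does not preclude that the block proposer for a given view behaves maliciously''). A Byzantine proposer of view $v+1$ could propose a child of some old, low Prepare-stage block rather than of $c_{v+1}$; honest validators will vote for it provided its parent is locally at Prepare stage and they have not seen a conflicting block of the same height, so such a block is not excluded merely by following the message-carrying and message-triggering rules. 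Ruling it out needs a quorum-intersection argument in the style of Theorem 1 --- the $2f+1$ voters for $b$ in view $v$ have each seen a block of height $height(b)$, so a view-$(v+1)$ quorum for a different block of that height forces an honest node to violate the duplicate-height check --- and even that only handles equal heights directly; heights strictly below $height(b)$ need a further argument, because a node can learn that a parent is at Prepare stage from a \texttt{PrepareQC} without ever having seen the intermediate ancestors whose heights would trigger the check. Until the auxiliary claim is established under a possibly Byzantine proposer, the lower bound $height(b') \geq height(c_{v+1})+1$, on which both of your cases rest, is unsupported.
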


This result should be generalizable to an arbitrary number of view changes. A rigorous proof of this generalization requires considering some additional cases. For example, it is not guaranteed that every view has Prepare stage blocks in that view. The local blockchain in such an example is given in Figure 2 below, in which no blocks reached Prepare stage in view $C$, so the block proposer for the next view, $D$, proposed blocks based on $b^{B}_2$. 

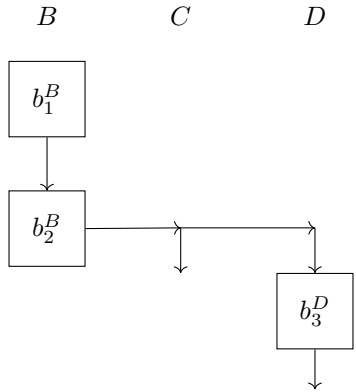
\begin{figure}[ht!]
    \centering
    \begin{tikzpicture}
      \node (FB) {$B$};
      \node[right=35 pt of FB] (FC) {$C$};
      \node[right=35 pt of FC] (FD) {$D$};
      \node[mysquare, below=10 pt of FB] (F) {$b^{B}_1$};
      \node[mysquare, below=20 pt of F] (G) {$b^{B}_2$};
      \node[mysquare, below=90 pt of FD] (H) {$b^{D}_3$};
      \draw[onearrow={1}{}]
        (F.south) -- (G.north);
      \draw[onearrow={1}{}]
        ([yshift=-73pt]FC.south) -- ([yshift=-90pt]FC.south);
      \draw[onearrow={1}{}]
        ([yshift=-73pt]FD.south) -- ([yshift=-90pt]FD.south);
      \draw[onearrow={1}{}]
        (H.south) -- ([yshift=-15pt]H.south);
      \draw[onearrow={1}{}]
        (G.east) -- ([yshift=-73pt]FC.south);
      \draw[onearrow={1}{}]
        ([yshift=-73pt]FC.south) -- ([yshift=-73pt]FD.south);
    \end{tikzpicture}
    \caption{Case with no Prepare stage blocks in the view of block proposer $C$.}
\end{figure}

By induction on the view number, we should now be able to prove the following fact about the height of blocks. 
\begin{lemma} [View number and Prepare stage block height]
For all $n,m,b,b',v,v'$ such that $PrepareinView(b, v, n)$, if $PrepareinView(b', v', m)$ and $v \leq v'$, then $height(b) \leq height(b')$. 
\end{lemma}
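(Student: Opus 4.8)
The plan is to argue by strong induction on $v'$, keeping $b$, $v$, $n$ fixed, using the preceding lemma on consecutive views for the one‑step case and Lemma~\ref{lem:ab-height} to control the height of the block a fresh view is built on. The substantive case is $v<v'$; the base case $v'=v+1$ is exactly the consecutive‑view lemma, and for the degenerate case $v'=v$ the two blocks both lie on the single block‑proposer's chain for that view (see the structural fact below), so they are comparable by height.

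First I would record a structural fact used throughout: every block that is prepare‑in‑view in a view $w$ is one of the (at most $10$) blocks proposed in view $w$. A witness for $PrepareinView(b'',w,\cdot)$ is either a quorum of \texttt{PrepareVote}$(b'',\cdot,w)$ messages or a single \texttt{PrepareQC}$(b'',\cdot,w)$ message, and—unwinding the vote‑reciprocation rule and the quorum‑witnessing rule for \texttt{PrepareQC}, together with the view‑validity check—each traces back to some node casting an original \texttt{PrepareVote} for $b''$ in view $w$, which by the \texttt{PrepareBlock} handler requires having received \texttt{PrepareBlock}$(b'',\cdot,w)$, a message only emitted for blocks proposed in view $w$. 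Since the proposer of view $w$ emits its blocks as a single chain rooted at the child of the block $b_{co}$ carried over into view $w$, this gives $height(b'')\ge height(b_{co})+1$ for every such $b''$; in particular $height(b')\ge height(b_{co}^{v'})+1$, where $b_{co}^{v'}$ is the block carried over into view $v'$.

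For the inductive step ($v'\ge v+2$) it then remains to bound $height(b_{co}^{v'})$ below and chain back to $b$. By the carryover‑block definition, either (a) $b_{co}^{v'}$ is the last block proposed in view $v'-1$ and is in Prepare stage for the view‑$v'$ proposer, or (b) some \texttt{ViewChangeQC}$(b_{co}^{v'},\cdot,v'-1)$ has been processed. In case (a), since a block proposed in view $v'-1$ can only gather votes or a \texttt{PrepareQC} in views $\ge v'-1$, being in Prepare stage forces $PrepareinView(b_{co}^{v'},v'-1,\cdot)$; applying the induction hypothesis to $b$ at view $v$ and $b_{co}^{v'}$ at view $v'-1$ (legitimate since $v\le v'-1<v'$) gives $height(b)\le height(b_{co}^{v'})<height(b')$. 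In case (b), Lemma~\ref{lem:ab-height} gives $height(b_{co}^{v'})\ge MaxHeight(v'-1)-1$, i.e.\ $MaxHeight(v'-1)\le height(b')$, so it suffices to show $height(b)\le MaxHeight(v'-1)$.

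The main obstacle is precisely this last inequality, because of the phenomenon in Figure~2: some views in $[v,v'-1]$—possibly even views adjacent to $v'-1$—may contain no block that reaches Prepare stage in that view, so $MaxHeight(v'-1)$ may be realized by a block prepared in an earlier view $w'$ with no guarantee that $w'\ge v$; the height is then "carried forward" only through successive carryover blocks, not through a chain of prepare‑in‑view blocks. I would resolve this by strengthening the induction hypothesis to also assert that $MaxHeight$ is non‑decreasing in the view number; since $PrepareinView(b,v,n)$ implies $height(b)\le MaxHeight(v)$, monotonicity gives $height(b)\le MaxHeight(v)\le MaxHeight(v'-1)$, closing case (b). The strengthened statement is proved by the same induction: the structural fact gives $MaxHeight(w)\ge height(b_{co}^{w})+1$, and the case analysis on $b_{co}^{w}$ (case (a) via the induction hypothesis, case (b) via Lemma~\ref{lem:ab-height}) gives $height(b_{co}^{w})+1\ge MaxHeight(w-1)$; one also checks that, under the convention that the highest Prepare block in view $w$ ranges over nodes currently in view $w$, $MaxHeight(w)$ is well defined even when no new block reaches Prepare stage in $w$ (it is then at least $height(b_{co}^{w})$). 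The remaining details—existence of the relevant indices in $N_{v'}$, and the observation that Theorem~1 is not needed here—are routine.
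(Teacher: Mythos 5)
The paper never actually proves this lemma---it says only ``By induction on the view number, we should now be able to prove the following fact''---so there is no reference argument to compare against; your plan is a sensible elaboration of that hint, and your identification of the Figure-2 obstacle (views in which no new block reaches Prepare stage) as the crux is the right instinct. There are, however, two genuine gaps. The first is the case $v=v'$. The statement as printed is false there: in normal operation the ten blocks of a single view all satisfy $PrepareinView(\cdot,v,\cdot)$ at strictly increasing heights, so taking $b$ to be a later block of view $v$ and $b'$ an earlier one violates $height(b)\le height(b')$. Your remark that the two blocks are ``comparable by height'' does not deliver the required inequality; the honest move is to flag that the hypothesis must be $v<v'$ (which is all the downstream results, Lemma 4 and Theorem 2, actually need), not to absorb the degenerate case into the induction.

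The second gap is in the strengthened induction. You obtain monotonicity of $MaxHeight$ by chaining $MaxHeight(w)\ge height(b_{co}^{w})+1$ with $height(b_{co}^{w})+1\ge MaxHeight(w-1)$, but the first inequality follows from your structural fact only when some block proposed in view $w$ actually reaches Prepare stage in view $w$---exactly the situation your own Figure-2 caveat excludes. In the excluded case you concede only $MaxHeight(w)\ge height(b_{co}^{w})$, which together with Lemma~\ref{lem:ab-height} yields $MaxHeight(w)\ge MaxHeight(w-1)-1$: off by one, so the induction does not close. Monotonicity is nevertheless true, but for a different reason: a node's $l_{counting}$ only grows, so any block in Prepare stage for a node remains so as that node's view advances, and the node realizing $MaxHeight(w-1)$ still witnesses a block of that height once it enters view $w$ (equivalently, under the ``$\exists v'\le v$'' reading of the three-place $Prepare$ used in the definition of $HighestPrepareBlock$, monotonicity is definitional). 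Substituting this persistence argument for your carryover-based chain repairs the step. A smaller caveat worth recording: your structural fact---every block prepare-in-view in $w$ sits on the view-$w$ proposer's chain strictly above the carryover block---tacitly assumes the view-$w$ proposer is honest, whereas the fault model allows a Byzantine proposer to root its chain at any block for which it holds a \texttt{PrepareQC}; this needs either an explicit honesty assumption or a separate quorum argument.
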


Intuitively, if every view change is either a normal view change or an abnormal view change where the carryover block is the highest Prepare block in the view, then we can strengthen the $\leq$ relation in the lemma above to $<$. Only abnormal view change cases where the carryover block is the second highest Prepare block in the view allow for two Prepare stage blocks of the same height in different views, as illustrated in Case 2 of Figure 1. Therefore, we should be able to specialize the lemma above to the following. 
\begin{lemma} [Different view Prepare stage block height]
For all $n,m,b,b',v,v'$ such that we have $PrepareinView(b, v, n)$, $PrepareinView(b', v', m)$, $height(b) = height(b')$ and $v < v'$, then $b$ must be the highest Prepare block in $v$ and $b'$ must be the first (equivalently, lowest) Prepare block in $v'$. 
\end{lemma}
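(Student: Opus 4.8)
The plan is to derive this lemma almost entirely from the preceding \emph{View number and Prepare stage block height} lemma (call it monotonicity), by observing that the two assertions to be proved are precisely the two ends of a short chain of inequalities that collapses once we impose $height(b)=height(b')$. I first record that $PrepareinView(x,w,p)$ implies $Prepare(x,w,p)$ (take $w$ as the witnessing earlier view), so $b$ and $b'$ are in Prepare stage in $v$ and $v'$ for $n$ and $m$. Now let $\hat b$ be a block of maximum height among all $b''$ for which $Prepare(b'',v,n'')$ holds for some $n''$; it exists because $b$ is one such block. Unfolding $Prepare(\hat b,v,\cdot)$ gives a view $\hat v\le v$ with $PrepareinView(\hat b,\hat v,\cdot)$, and since $\hat v\le v<v'$, monotonicity applied to $\hat b$ (view $\hat v$) and $b'$ (view $v'$) yields $height(\hat b)\le height(b')$. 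With $height(b)\le height(\hat b)$ and $height(b)=height(b')$ this forces
\[
height(b)=height(\hat b)=height(b'),
\]
so every Prepare-stage block of $v$ has height at most $height(b)$; since moreover $Prepare(b,v,n)$, the predicate $HighestPrepareBlock(b,v)$ holds, which is the first conclusion.

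For the second conclusion I run the mirror argument on the $v'$ side. Let $\check b$ have minimum height among all $b''$ with $PrepareinView(b'',v',m'')$ for some $m''$; it exists since $b'$ is one such. Monotonicity applied to $b$ (view $v$) and $\check b$ (view $v'$), using $v<v'$, gives $height(b)\le height(\check b)$, and together with $height(\check b)\le height(b')=height(b)$ we get $height(\check b)=height(b')$. Thus $b'$ has minimum height among blocks $PrepareinView$ in $v'$; since Theorem~1 makes that minimum uniquely attained, $b'$ is the lowest, i.e.\ the ``first'', Prepare block in $v'$. If ``first'' is meant in the stronger sense that $b'$ carries block index $1$ in $v'$, I would additionally use the pipeline structure of block production: the blocks of $v'$ have consecutive heights starting from the child of the carryover block, and a node votes for a block in $v'$ only once every lower-height block produced in $v'$ has reached Prepare stage, forcing the lowest $PrepareinView$ block of $v'$ to be the index-$1$ block.

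The main obstacle is not the inequality chain, which is a near-immediate corollary of monotonicity, but the definitional bookkeeping. One must keep straight that ``highest Prepare block in $v$'' ranges over $Prepare(\cdot,v,\cdot)$, hence over blocks possibly voted in views \emph{strictly below} $v$, so monotonicity is invoked at those earlier views rather than at $v$ itself; and, for the index-$1$ reading of ``first'', one must argue that the lowest $PrepareinView$ block of $v'$ really is the first block proposed in $v'$ — the only place the view-change case analysis (Lemma~\ref{lem:ab-height}) and the pipeline voting discipline genuinely enter. It is also worth checking that monotonicity is used only with the strict inequality $v<v'$ we are given, which is the case in every step above.
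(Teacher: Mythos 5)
Your proposal is correct, and it takes a different route from the one the paper gestures at. The paper offers no proof of this lemma at all: it only motivates the statement informally via the view-change case analysis of Lemma~\ref{lem:ab-height} and Figure~1, arguing that two equal-height Prepare blocks in distinct views can arise only in the Case-2 timeout scenario where the second-highest block is aggregated into the \texttt{ViewChangeQC}, which forces $b$ to be the top of view $v$ and $b'$ to be the bottom of view $v'$. You instead derive both conclusions as an immediate corollary of the monotonicity lemma (Lemma~3): any $b''$ with $Prepare(b'',v,\cdot)$ reached $PrepareinView$ at some $\hat v\le v<v'$, so monotonicity against $b'$ gives $height(b'')\le height(b')=height(b)$, yielding $HighestPrepareBlock(b,v)$; symmetrically any $PrepareinView$ block of $v'$ has height at least $height(b)=height(b')$, and Theorem~1 makes the minimum unique. (You do not even need the auxiliary extremal blocks $\hat b$ and $\check b$; quantifying over arbitrary $b''$ suffices.) Your approach buys a short, fully rigorous reduction, at the cost of pushing all the protocol-level content --- the view-change case analysis that actually explains \emph{why} heights are monotone across views --- into Lemma~3, which the paper also leaves unproved. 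Your separate pipeline argument for the stronger ``index-1'' reading of ``first'' is a reasonable addition, and your caution about invoking monotonicity at the earlier view $\hat v$ rather than at $v$ itself is exactly the right definitional point to flag, given that $HighestPrepareBlock$ is defined via $Prepare$ rather than $PrepareinView$.
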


We give an informal proof of Precommit stage safety based on the above lemmas. 
\begin{theorem}
For all $n,m,b,b'$ such that $Precommit(b, n)$ and $Precommit(b', m)$, if $height(b) = height(b')$, then $b = b'$. 
\end{theorem}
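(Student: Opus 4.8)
The plan is to argue by contradiction, leaning entirely on Theorem~1 (within-view injectivity of Prepare-stage height) and the lemmas relating view number to Prepare-stage block height. Assume $Precommit(b,n)$, $Precommit(b',m)$, $height(b)=height(b')=:h$, and suppose toward a contradiction that $b\neq b'$. Unfolding the definition of Precommit stage, $b$ is in Prepare stage for $n$ and has a child $c$ that is also in Prepare stage for $n$, and symmetrically $b'$ has a child $c'$ in Prepare stage for $m$. Then $height(c)=h+1=height(c')$, and since $c=c'$ would force $b=parent(c)=parent(c')=b'$, we also have $c\neq c'$. So we are faced with two pairs of distinct, equal-height blocks that are globally in Prepare stage: $\{b,b'\}$ at height $h$ and $\{c,c'\}$ at height $h+1$. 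Fix views $v_b,v_c\le v_n$ with $PrepareinView(b,v_b,n)$, $PrepareinView(c,v_c,n)$, and $v_{b'},v_{c'}\le v_m$ with $PrepareinView(b',v_{b'},m)$, $PrepareinView(c',v_{c'},m)$.

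Next I would reduce to the genuinely cross-view situation. Since $b\neq b'$ have the same height, Theorem~1 forbids $v_b=v_{b'}$, and likewise $c\neq c'$ forbids $v_c=v_{c'}$; otherwise we would already be done. A short case analysis using the \emph{View number and Prepare stage block height} lemma together with the parent--child height relation $height(c)=height(b)+1$ (e.g.\ $v_c\le v_b$ would give $height(c)\le height(b)$, impossible) pins down the relative order of the four views; after relabeling the two triples $(b,c,n)$ and $(b',c',m)$ if necessary, we may assume $v_c<v_{c'}$. Applying the \emph{Different view Prepare stage block height} lemma to the pair $c,c'$ then yields that $c$ is the highest Prepare block in view $v_c$, so $MaxHeight(v_c)=h+1$, and that $c'$ is the first (lowest) Prepare block in view $v_{c'}$.

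The crux is to exploit that $c'$ is the first Prepare block of its view. By the block-proposal rule, the first block produced in a view is the child of that view's carryover block, so $b'=parent(c')$ is the carryover block into view $v_{c'}$. By the definition of carryover block, either $b'$ is the last block of view $v_{c'}-1$ and is in Prepare stage -- in which case it is the highest Prepare block of that view, of height $MaxHeight(v_{c'}-1)$ -- or $b'$ is obtained from a \texttt{ViewChangeQC} message, in which case Lemma~\ref{lem:ab-height} gives $height(b')\ge MaxHeight(v_{c'}-1)-1$. Since $v_c\le v_{c'}-1$, monotonicity of $MaxHeight$ (a consequence of the \emph{View number} lemma) gives $MaxHeight(v_{c'}-1)\ge MaxHeight(v_c)=h+1$; combined with $height(b')=h$, the first alternative is impossible and the second forces $MaxHeight(v_{c'}-1)=h+1$ with $b'$ sitting one height below a height-$(h+1)$ Prepare block of view $v_{c'}-1$. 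Now $MaxHeight$ is squeezed to the constant value $h+1$ on the whole interval of views from $v_c$ to $v_{c'}-1$, so iterating the same carryover analysis down this interval -- or, when the two relevant height-$(h+1)$ blocks land in the same view, a single application of Theorem~1 -- identifies that height-$(h+1)$ block with $c$, whence $b'=parent(c)=b$, contradicting $b\neq b'$. Therefore $b=b'$.

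I expect the main obstacle to be the third step. Two points there need care: first, formalizing ``the first Prepare block of a view is the child of that view's carryover block'' -- one has to combine the block-proposal rule with the observation that no strictly lower Prepare block can coexist in the same view; second, and more delicately, the descent across a run of \emph{stalled} views (the phenomenon illustrated in Figure~2, where a view contributes no new Prepare-stage block and the carryover is propagated unchanged), which requires juggling the Carryover-block definition, Lemma~\ref{lem:ab-height}, and monotonicity of $MaxHeight$ simultaneously rather than reasoning about a single view change in isolation. As an aside, an alternative route is available: the vote quorums witnessing $Prepare(c,n)$ and $Prepare(c',m)$ intersect in at least $f+1$ nodes, each of which has both $b$ and $b'$ in Prepare stage (since voting for a block requires its parent to have reached Prepare stage, via the \texttt{PrepareQC} carried by the vote), so at least one honest node holds two distinct equal-height Prepare-stage blocks; ruling that out via the same-height discard rule for \texttt{PrepareBlock} together with the view--height lemmas would also close the argument, but it relocates rather than removes the difficulty.
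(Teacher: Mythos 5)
Your strategy is correct in outline and reaches the contradiction by a genuinely different decomposition than the paper's. The paper applies Lemma 4 directly to the pair $b,b'$, concludes (WLOG $v<v'$) that $b$ is the highest Prepare block of its view and $b'$ the first of its view, and then derives the contradiction from the other half of the hypothesis: the view change out of $v$ must have aggregated the second-highest block, so the chain forks below $b$, no child of $b$ is ever proposed again, and hence $b$ cannot have a Prepare-stage child, contradicting $Precommit(b,n)$. You instead lift the analysis to the children $c,c'$, apply Theorem 1 and Lemma 4 to \emph{them}, and descend the carryover chain to force $b'=parent(c')=parent(c)=b$, contradicting $b\neq b'$. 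Your route buys a cleaner localization of the difficulty -- everything reduces to identifying the single carryover block entering view $v_{c'}$ -- whereas the paper spreads the difficulty over an implicit induction (``this reasoning applies for all following Prepare stage blocks'') showing that $b$ is never carried over in any later view. The trade-off is that your crux step needs strictly more than Lemma~\ref{lem:ab-height} provides: that lemma constrains only the \emph{height} of the block aggregated in a \texttt{ViewChangeQC}, but your ``whence $b'=parent(c)=b$'' needs its \emph{identity}, i.e., that the height-$h$ block reported as some node's local highest Prepare block is necessarily $parent(c)$ rather than some other height-$h$ Prepare-stage block. You correctly flag this as the main obstacle; note, however, that the paper's own proof makes the mirror-image leap (``the next Prepare stage block following the view change will be a child of $b$'s parent block''), so at the level of rigor the paper adopts, your sketch is on equal footing. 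Closing either version properly would require an auxiliary fact -- e.g., that an honest node's Prepare-stage blocks contain at most one block per height, obtained from the same-height discard rule together with Theorem 1 and Lemma 4 -- which neither your proposal nor the paper supplies.
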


\begin{proof} 
By definition of Precommit stage, we know that $b$ and $b'$ are in local Prepare stage for $n$ and $m$. Let $v$ and $v'$ be the views during which $b$ and $b'$ received enough votes, respectively, such that $PrepareinView(b, v', n)$ and $PrepareinView(b', v', m)$. By Lemma 4, because $height(b) = height(b')$, we know that $b$ must be the last block in view $v$, and $b'$ must be the first block in view $v'$. By definition of Precommit stage, $b$ must have a child block (that is also at Prepare stage). However, because there exists a block of the same height as $b$ in a different view, namely $b'$, the view change from $v$ must have occurred via an abnormal timeout where the second highest block from $v$ was aggregated in the \texttt{ViewChangeQC} message. Therefore, no child blocks of $b$ can exist, because the next Prepare stage block following the view change will be a child of $b$'s parent block. This reasoning applies for all following Prepare stage blocks, yielding a contradiction.
\end{proof}

Finally, the proof of the third and final safety property stating that Commit stage block height is injective is direct from Theorem 2.

\begin{theorem}
For all $n,m,b,b'$ such that $Commit(b, n)$ and $Commit(b', m)$, if $height(b) = height(b')$, then $b = b'$. 
\end{theorem}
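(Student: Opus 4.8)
The plan is to derive this theorem as an immediate corollary of Theorem~2, exactly as the surrounding text suggests. The first step is to unfold the definition of Commit stage: from $Commit(b,n)$ we obtain in particular that $Precommit(b,n)$ holds (Commit stage requires the block to be in Precommit stage, plus an extra condition on a child block). Symmetrically, $Commit(b',m)$ yields $Precommit(b',m)$.

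Once both blocks are known to be in Precommit stage, I would simply invoke Theorem~2 with the hypothesis $height(b) = height(b')$ to conclude $b = b'$. No case analysis or induction is needed at this level, since all of the combinatorial work — the pigeonhole argument on quorum intersections from Theorem~1, the view/height monotonicity lemmas, and the contradiction argument ruling out same-height blocks in different views — has already been discharged inside the proof of Theorem~2.

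There is essentially no main obstacle here: the argument is a two-line reduction. The only thing to be careful about is making explicit that the ``has a child block in Precommit stage'' clause of the Commit stage definition is simply discarded — it is not needed for injectivity, it only strengthens Commit relative to Precommit. If one wanted, one could alternatively state and prove a sharper fact (e.g., that the child-block clause forces a stronger chain-consistency property), but that is unnecessary for the stated theorem, so I would not pursue it. I would close by remarking that this completes the chain of three safety properties — Prepare-in-view injectivity, Precommit injectivity, and Commit injectivity — each building on the previous one.
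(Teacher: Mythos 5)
Your proposal matches the paper's proof exactly: the paper likewise observes that $Commit(b,n)$ implies $Precommit(b,n)$ by definition and then applies Theorem~2 directly. Your additional remark that the child-block clause is simply discarded is a harmless elaboration of the same one-line argument.
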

\begin{proof} 
By the definition of Commit, we can directly apply Theorem 2. \end{proof}

\section{Conclusion}
We provided a rigorous specification of the Giskard consensus protocol, and formulated and proved several of its key safety properties. Our specification can serve as a reference when implementing and formally verifying Giskard. Reasoning about other classes of properties besides safety, notably \emph{liveness}, is an interesting avenue of future work.

\bibliographystyle{abbrv}
\bibliography{bib}

\appendix
\includepdf[pages=-]{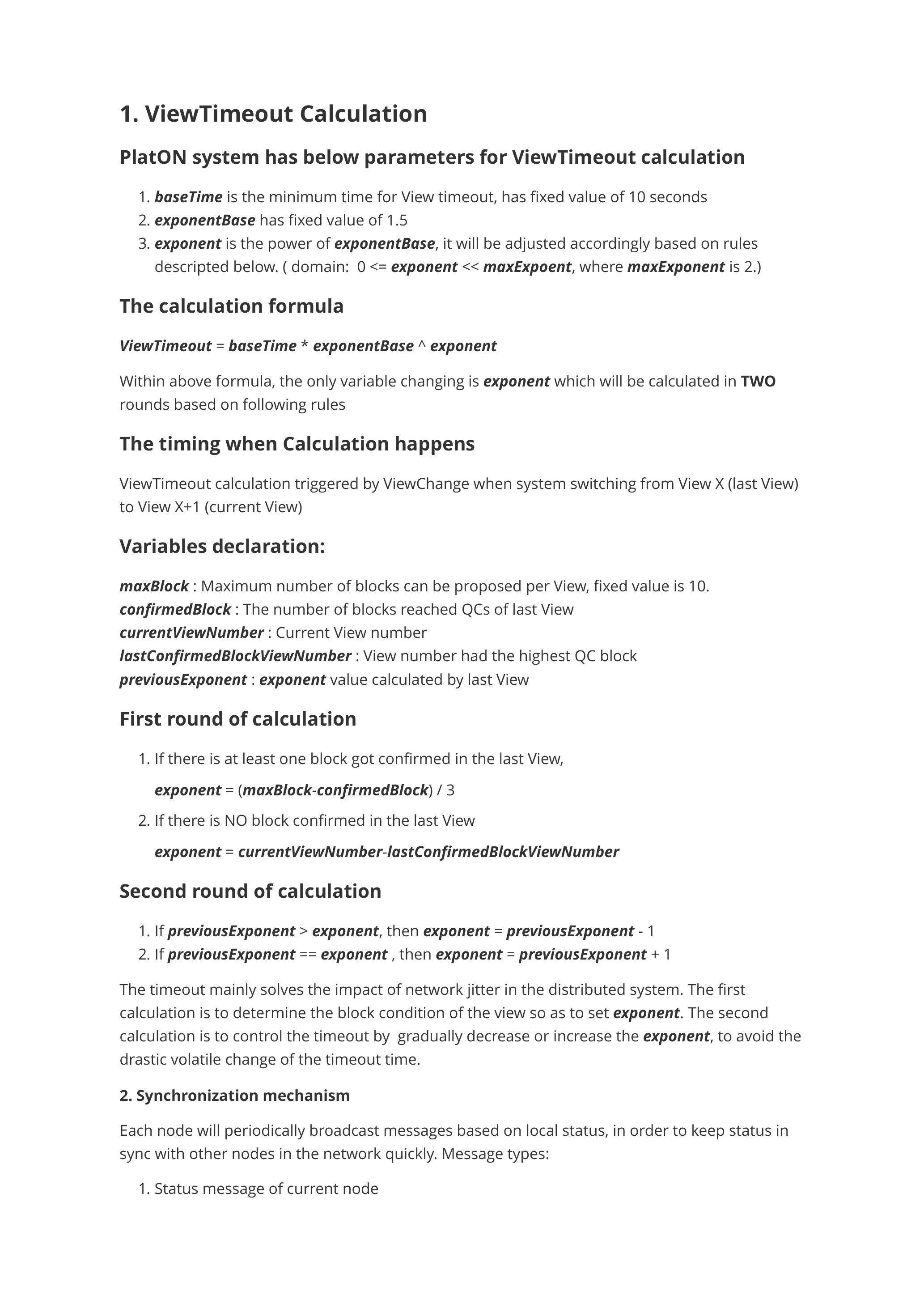}

\end{document}